\newcommand{\be}{\begin{equation}}
\newcommand{\ee}{\end{equation}}
\newtheorem{theorem}{Theorem}[section]
\newtheorem{definition}[theorem]{Definition}
\newenvironment{proof}{
\noindent {\it Proof.}}{\hfill$\Box$ }
\title{Localized eigenvectors on metric graphs} 
\author[1]{\normalsize{H. Kravitz} \thanks{hkravitz@pdx.edu}}
\author[2]{\normalsize{M. Brio}\thanks{brio@math.arizona.edu}}
\author[3]{\normalsize{J.-G. Caputo }\thanks{jean-guy.caputo@insa-rouen.fr}}
\affil[1]{Fariborz Maseeh Department of Mathematics and Statistics, Portland
State University, 1825 SW Broadway, Portland, OR 97201, United States
of America}
\affil[2]{Department of Mathematics, University of Arizona,
617 North Santa Rita Avenue, Tucson Arizona, 85721, United States of America}
\affil[3]{Laboratoire de Math\'ematiques, INSA de Rouen Normandie,
Avenue de l'Universit\'e , Saint-Etienne du Rouvray, 76801, France}
\date{\ }
\begin{document}

\maketitle

\begin{abstract}
Using our previously published algorithm, we analyze the eigenvectors 
of the generalized Laplacian for 
two metric graphs occurring in practical applications.
As expected, localization of an eigenvector is rare and 
the network should be tuned to observe exactly localized eigenvectors.
We derive the resonance conditions to obtain localized eigenvectors
for various geometric configurations and their combinations to form more 
complicated resonant structures. These localized eigenvectors suggest 
a new localization indicator based on the $L_2$ norm.
They also can be excited, even with leaky boundary conditions, as shown by
the numerical solution of the time-dependent wave equation on the
metric graph.
Finally, the study suggests practical ways to make resonating systems based
on metric graphs.  \\
Keywords: partial differential equations, metric graphs, localization
\end{abstract}

\section{Introduction} 

Partial differential equations (PDEs) on networks arise in many physical
applications such as gas and water networks  \cite{Herty10,water},
 electromechanical waves in
a transmission grid \cite{Kundur94}, air traffic control \cite{wb08},
microwave networks \cite{microwave}
and random nanofiber lasers \cite{Gaio,Cipolato}
to name a few. In many cases, the problem can be linearized. Then separation
of variables yields a Helmholtz (or Schr\"odinger) problem on the network.
The associated eigenvalues and eigenvectors play a fundamental role
in the theoretical analysis of the networks \cite{bk13}
and the practical applications.

The underlying mathematical model consists of a metric graph:
a finite set of
vertices connected by arcs (oriented edges) on which a metric is
assigned. Different coupling conditions can be implemented at the vertices.
The simplest assumes continuity of the field and zero
total gradient at the vertices (Kirchhoff's law). The standard
one-dimensional Laplacian together with these boundary conditions results
in a generalized Laplacian and associated Helmholtz eigenvalue problem.
It can be shown that with these coupling conditions (continuity and
Kirchhoff's law) the operator is self-adjoint \cite{bk13}, yielding
real eigenvalues and orthogonal eigenvectors. The eigenvectors form a
complete basis of the appropriate set of square integrable functions
on the graph. This spectral framework plays a key role in linear
PDEs. We review and apply it in the present article.

Combinatorial or discrete graphs bear some similarities to metric graphs.
For these graphs, edges only describe connections between vertices.
For undirected graphs, both the adjacency matrix and the graph Laplacian
are symmetric---they have real eigenvalues, and the
eigenvectors can be chosen orthogonal. Some eigenvectors have non zero
components on just a few vertices, they are {\it localized}
and this localization affects transport properties \cite{cks13}.
In a recent article \cite{cckp20} we showed that
the Laplacian eigenvectors of chains connected to complete graphs are
localized in the complete graph regions where connectivity is high.
These results were also found in the systematic study by
Hata and Nakao \cite{Hata} where they analyzed
graphs with random degree distributions.
The localization of eigenvectors found here is purely
topological because the Laplacian has equal weights.
Random weights introduce additional possibilities for localizing
the eigenvectors, see the pioneering 1958 study by Anderson \cite{Anderson}
of a Schr\"odinger matrix equation with random diagonal and 
off-diagonal elements and the very large literature that followed it. 

In contrast, the localization of eigenvectors on arcs of random metric quantum 
graphs has only been studied in a handful of articles and many of these
assume non standard interface conditions at the vertices. 
For the standard conditions, Schanz and Kottos \cite{sk03} established conditions
for localized eigenvectors in polygons. They use a scattering theory formalism
because they were interested in problems of quantum chaos. 
In another interesting study, Gnutzman, Schanz and Smilansky \cite{gss13}, using similar tools, established that localized eigenvectors cannot 
exist in trees and recovered
the conditions for localization in polygons. Because of the formalism 
used and despite their importance, these 
results are not so well known in the engineering community studying networks.

To study these practical network problems, we introduced 
recently a systematic procedure to compute eigenvalues
and eigenvectors of arbitrary order for general metric graphs \cite{method}. 
In the present article, we use this method
to analyze two metric graphs occurring in engineering applications and examine
the occurrence of localized eigenvectors. Using simple arguments, we examined 
the conditions to observe exactly localized eigenvectors, starting from
the simplest. We found the precise resonance conditions on the
lengths and the eigenvalues for several geometric
configurations such as a cycle with two edges and a polygon; we recover some results of \cite{sk03}, \cite{gss13}
and find new ones. In particular, we show how these localized eigenvectors 
can be connected to form a larger localized eigenvector; this can be 
important for resonator applications.
These results on localized eigenvectors prompted us to define 
a localization criterion giving the number of edges
involved in a localized eigenvector. Finally, the numerical solution 
of the wave equation on the metric graph reveals how these localized 
eigenvectors can be excited from a broadband initial condition. \\
The article is organized as follows. The statement of the problem,
A brief review of the spectral properties
of metric quantum graphs and of our computational algorithm are given in Section 2.
There we also compute numerically and characterize the eigenvectors of two metric graphs.
Section 3 lists exact resonant conditions to obtain localized eigenvectors in various geometric
configurations. The localization criteria are discussed in Section 4. We also show how the solution
of the wave equation with a leaky boundary converges to a localized eigenvector.
Section 5 concludes the article.

\section{Spectral theory for metric graphs }

We first recall the spectral theory formalism for completeness.
We will then illustrate it on two examples from engineering applications.

Consider a finite metric graph with $n$ vertices
connected by $m$ arcs (oriented edges) of length $l_j, \; j=1: m.$
Each edge is parameterized by its length $x$ 
from the origin vertex $x=0$ to the terminal vertex $x=l_j$. 
We recall the definition of the {\it degree} of a vertex: 
the number of edges connected to it.

On this graph, we define the vector component wave equation
\be\label{vwave} U_{tt} -{\tilde \Delta } U=0,\ee
where $U \equiv (u_1,u_2, \dots,u_m)^T$. Each component
satisfies the one-dimensional wave equation inside the respective arc,  
\be\label{wave}
{u_j}_{tt} -{u_j}_{xx} =0, ~~~  j=1, 2, \dots ,m \ee
In addition, the solution should be continuous at the vertices and also satisfy the 
Kirchhoff flux conditions at each vertex of degree $d$
\be\label{kirchof}
\sum_{j=1}^{d} \; {u_j}_{x} =0,
 \ee
where $\displaystyle {u_j}_{x}$ represent the outgoing fluxes for arc $j$ emanating from the vertex.

Consider equation (\ref{vwave}). Since the problem is linear, we can separate time and space
and assume a harmonic solution $U(x,t)= e^{ikt} \: V(x)$.
We then get a Helmholtz or Schr\"{o}dinger eigenproblem for $V$ on the graph
\be\label{helm}
-{\tilde \Delta } V = k^2 V ,\ee
and where ${\tilde \Delta }$ is the generalized Laplacian, i.e. the
standard Laplacian on the arcs together with the coupling conditions at
the vertices.
Note that we exclude all degree two vertices since due to
continuity and the Kirchhoff condition, two edges sharing
such a vertex can be merged into a single edge---see \cite{berkolaiko17}.

The generalized eigenvalue problem (\ref{helm}) admits an inner product
obtained from the standard inner product on $L_2$ space---see \cite{bk13}. We have
\be\label{inproduct}
\langle f ,g \rangle \equiv \sum_{arc ~~j} \langle f_j ,g_j \rangle,~~~~ \langle f_j ,g_j \rangle=\int_{0}^{l_j} f_j(x) g_j(x) dx . \ee
A  solution in terms of Fourier harmonics on each branch $j$ of length $l_j$ is 
\be\label{simple}
v_j(x) = A_j \sin k x + B_j \cos k x . \ee
Writing down the coupling conditions at each vertex, one obtains a 
homogeneous linear system whose $k$-dependent matrix is 
singular at the eigenvalues.

Using solution (\ref{simple}) on each arc with unknown coefficients $A_j$
and $B_j$, the coupling conditions at each vertex yield 
the homogeneous system
\be\label{Meqn}
M(k) X =0, \ee
of $2m$ equations for the vector of $2m$ unknown arc amplitudes  
$$\displaystyle X=(A_1,B_1, A_2, B_2, \dots, A_m, B_m)^T  . $$ 
The matrix $M(k)$ is singular at the eigenvalues $-k^2$. We call
these $k$-values resonant frequencies.
A practical and robust computational algorithm for the computation of these
eigenvalues and eigenvectors was proposed and studied in \cite{method}.

For each resonant frequency $k_q$, the eigenvectors $V^q$ 
span the null space of the matrix $M(k_q)$. They can then be written as
\be\label{Vi}
V^q = 
\begin{pmatrix} A^q_1 \sin k_q x + B^q_1 \cos k_q x \cr 
                A^q_2 \sin k_q x + B^q_2 \cos k_q x \cr
                \dots  \cr
                A^q_m \sin k_q x + B^q_m \cos k_q x \cr
\end{pmatrix}
\ee
They can be normalized using the scalar product defined above. We have
\be\label{vivi}\lVert V^q \rVert^2 =\langle V^q,V^q \rangle = \sum_{j=1}^m \langle V^q_j,V^q_j \rangle, 
\ee
where 
\be\label{vqj} 
V^q_j = A^q_j \sin k_q x + B^q_j \cos k_q x, \ee
 and $\langle V^q_j,V^q_j \rangle$
is the standard scalar product on $L_2([0,l_j])$ . This defines
a broken L$_2$ norm or graph norm. 
The scalar product $\langle V^q_j,V^q_j \rangle$ can be computed explicitly 
$$\langle V^q_j,V^q_j \rangle= 
\left ( {A^q_j}^2 + {B^q_j}^2 \right ) {l_j \over 2} $$
\be\label{vpvp}
+ {\sin {2 k_q l_j} \over 4 k_q} \left ( -{A^q_j}^2 + {B^q_j}^2 \right )
+ A^q_j B^q_j { 1 - \cos {2 k_q l_j} \over 2 k_q}.\ee
It has been shown that, for the standard coupling conditions used here
(continuity and Kirchhoff's condition), the eigenvectors $V^i$ form a complete 
orthogonal basis of the Cartesian product 
$L_2([0,l_1]) \times L_2([0,l_2])  \dots \times L_2([0,l_m]) $---see \cite{bk13}.

Once the eigenvalue problem is solved, 
one can proceed with the spectral solution of 
the time-dependent problem, exactly as for the one-dimensional wave equation
on an interval. 
For that, expand the solution of the wave equation on the
graph (\ref{vwave}) using the eigenvectors, 
\be\label{spec1}
U(x,t) = \sum_{q=1}^\infty a_q(t) V^q ,\ee
and obtain a simplified description of the dynamics in terms of the amplitudes
$a_q$.

\subsection{ Localized eigenvectors : two numerical examples}

Like the eigenvectors of the discrete Laplacian, eigenvectors
of the generalized Laplacian on a network can be localized
in the following sense.
\begin{definition}[{\rm Localized eigenvector}]
\label{localevector}
An eigenvector $V^i$ of the generalized Laplacian operator
with the standard coupling conditions is localized if its
components $V^i_j$ (see (\ref{vqj})) satisfy
$V^i_j \neq 0$ for a finite number of edges $j$ and $V^i_j=0$
for the rest.
\end{definition}
Such a localized eigenvector plays an important role in the 
dynamics of the wave equation.

We consider two graphs with no symmetries and arbitrary edges to illustrate how frequently 
localized eigenvectors appear. To identify these eigenvectors, we 
compute the $L_2$ norm ratio 
\be \label{ej} e_q(j) \equiv {\langle V^q_j,V^q_j \rangle \over \langle V^q,V^q \rangle} \ee 
for each edge $j=1,2,\dots ,m$.

\subsection{Graph G14}

We introduce the 14 edge graph---see Fig. \ref{g14}. It was adapted from IEEE case 14 \cite{IEEE} by
eliminating the degree two vertices. Such a metric graph can be used to model how 
electromechanical waves propagate in an electrical grid \cite{Kundur94}.
A localized eigenvector in this context would correspond to an accumulation of energy on just a few
equipments and this could cause their failure.
\begin{figure}[H]
\centerline{
\epsfig{file=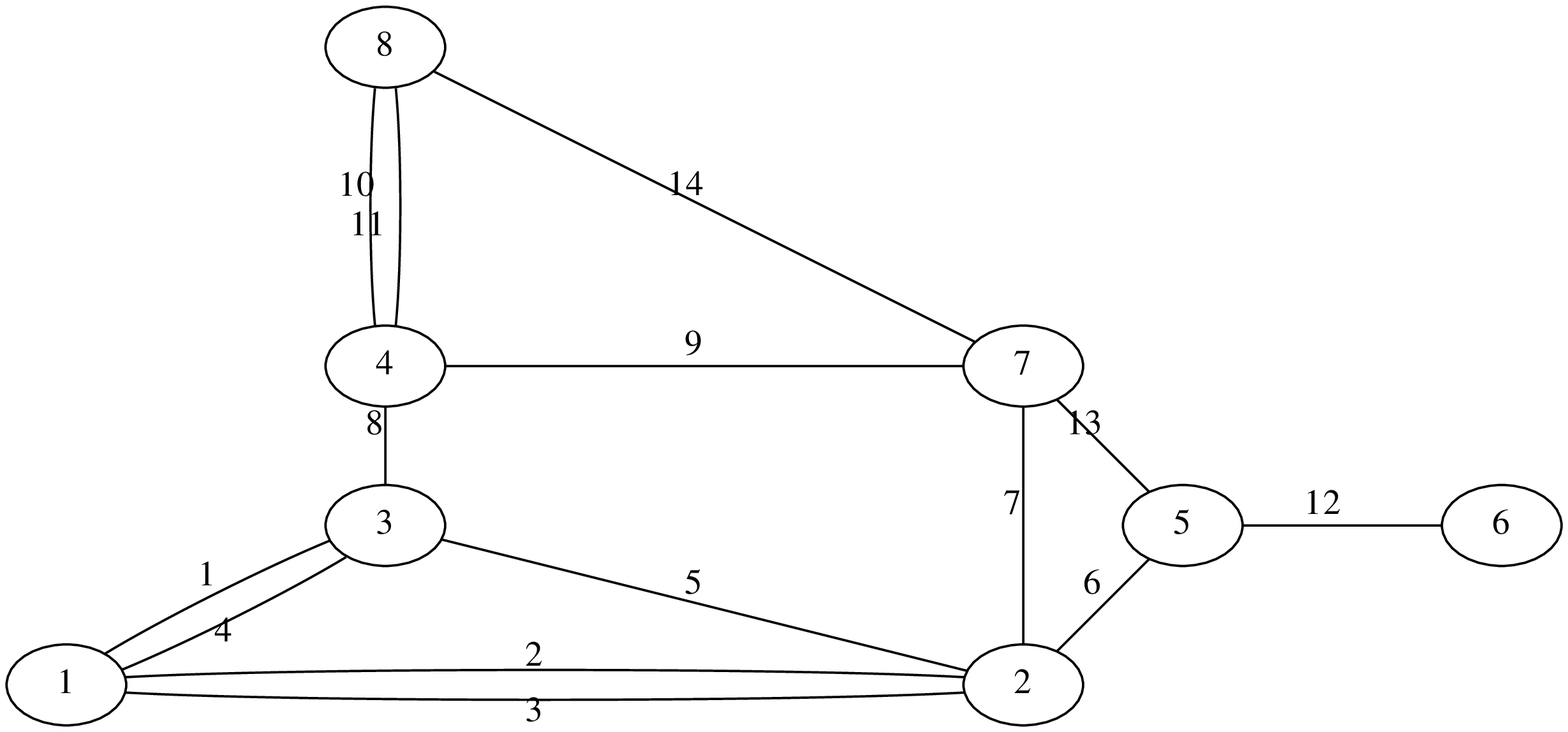,height=6cm,width=12.675 cm,angle=0}
}
\caption{The 14 vertex metric graph G14} 
\label{g14}
\end{figure}
The lengths $l_i, i=1,\dots,m=14$ are given in Table \ref{tab2}.
\begin{table} [H]
\centering
\begin{tabular}{|l|c|c|c|c|c|r|}
\hline
$l_1$       & $l_2$      & $l_3$ & $l_4$ & $l_5$       &  $l_6$     & $l_7$\\     
11.91371443 & 7.08276253 &   6   & 2.236067977 &4.123105626 & 1.414213562 &  2\\
            &            &       &   & & &            \\ \hline
 $l_8$ &  $l_9$      & $l_{10}$  & $l_{11}$&  $l_{12}$  &$l_{13}$   & $l_{14}$ \\
1 & 4.7169892  & 4.472135955& 2       & 2  &1.414213562 & 4.472135955 \\ \hline 
\end{tabular}

\caption{The lengths $l_i$ for the graph G14 . }
\label{tab2}
\end{table}

\begin{figure}[H]
\centerline{
\epsfig{file=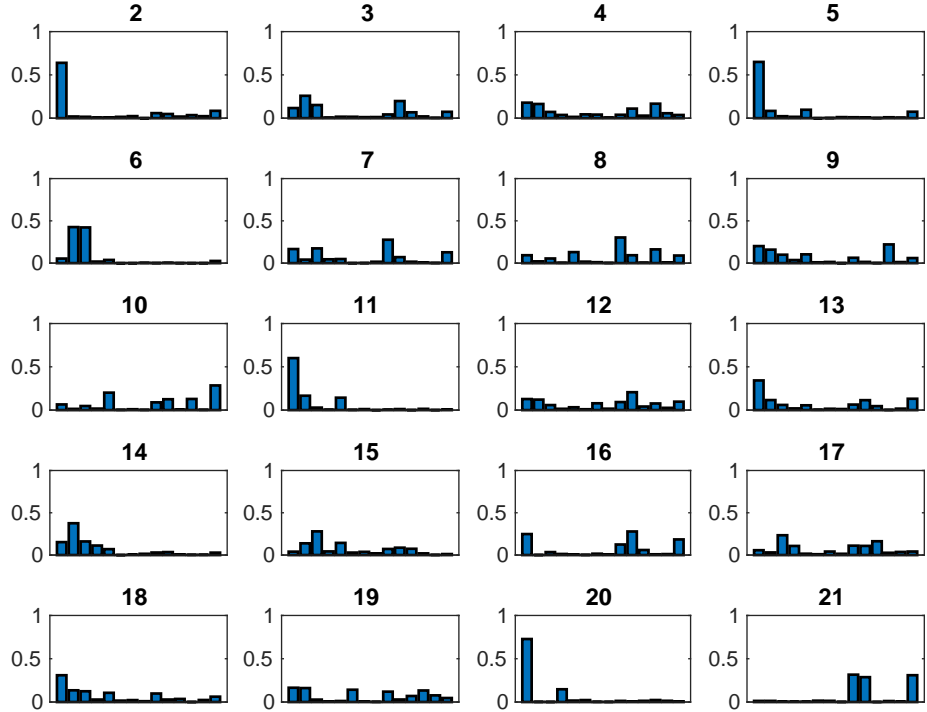,height=11.25 cm,width=15 cm,angle=0}
}
\caption{Histograms of the $L_2$ norm ratio $e_q(j)$ (\ref{ej})  for
$q=2,\dots 21$ from left to right and top to bottom for the G14 graph.}
\label{lhg14}
\end{figure}
Fig. \ref{lhg14} presents the histograms of the $L_2$ norm ratio $e_q(j)$ 
for eigenvectors $V^q, ~~q=2,\dots 21$ for the G14 graph. Note that
$q=2,5,6, 11, 20$ and 21 correspond to eigenvectors where 
$e_q(j) \ge 0.5$ for one or two edges $j$ and $e_q(j) < 0.05$
for the other edges. 
In Fig. \ref{lg14} we present the approximately localized eigenvectors corresponding
to $q=2,5,6, 11, 20$ and 21.
For a given eigenvector $V^q$, we present for each edge $j$ the quantity 
$e_q(j)$.
\begin{figure}[H]
\centerline{
\epsfig{file=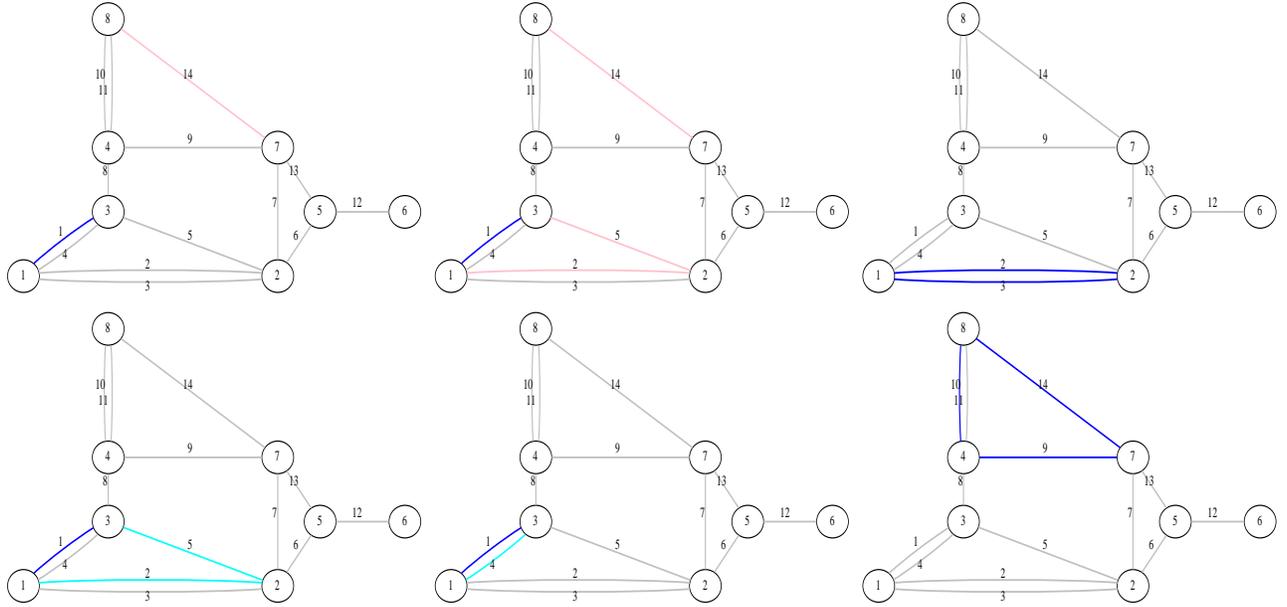,height=8 cm,width=16.9 cm,angle=0}
}
\caption{Approximately localized eigenvectors $V^q$ for $q=2,5,6, 11, 20$ and 21.
We present for each arc $j$ the quantity $e_q(j)$
with the color code $ e_q(j) < 0.06$ (grey), $0.06 < e_q(j)< 0.12$ (pink),
$0.12 < e_q(j) < 0.2$ (cyan) and $0.2 < e_q(j)$ (blue). 
The values of $k_q$ are given in the table below. }
\label{lg14}
\end{figure}

The values of $k_q$ are presented in Table \ref{tab3} below.
\begin{table} [H]
\centering
\begin{tabular}{|l|c|r|}
\hline
0.2347645148 & 0.4657835674  & 0.480197067 \\ \hline
0.8078723081 & 1.3322287766 &  1.379308786 \\ \hline
\end{tabular}
\caption{The values of $k_q$ shown in Fig. \ref{lg14}.}
\label{tab3}
\end{table}

\subsection{Buffon graph }

The second example we present comes from a study by Gaio et al \cite{Gaio}
suggesting that lasers can be produced by fusing randomly placed 
nanometric optical waveguides.
The resulting graph appears as a series of scattered needles.
Such a Buffon's needle graph with 165 arcs and 104 vertices is
shown in Fig. \ref{buf1}.

In Gaio's study, the fibers are active so that they would amplify the
field. The lasing effect would come from a balance between this amplification
and damping. In addition, there would be transparent conditions at the
boundaries so that any out-going radiation would be lost.
Then, a laser effect would occur on the localized eigenvectors of the
Laplacian and only on those because the extended eigenvectors would be damped
due to the boundary conditions.
\begin{figure}[H]
\centerline{
\epsfig{file=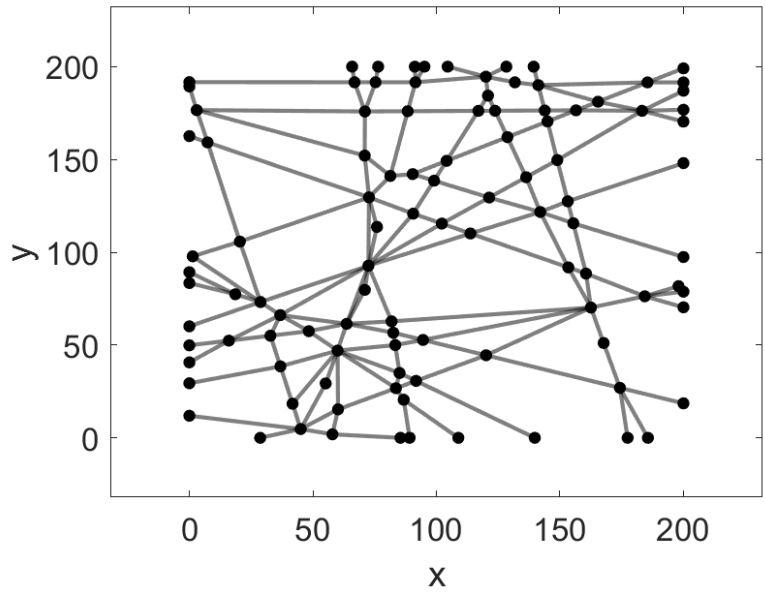,height= 8.4675 cm,width=11.28 cm,angle=0}
}
\caption{A Buffon's needle graph with $m=104$ vertices
and $m=165$ arcs.}
\label{buf1}
\end{figure}

Fig. \ref{lhbuf} presents the histograms of the $L_2$ norm ratio $e_q(j)$
for eigenvectors $V^q, ~~q=2,\dots 21$ for the Buffon graph of 
Fig. \ref{buf1}.
\begin{figure}[H]
\centerline{
\epsfig{file=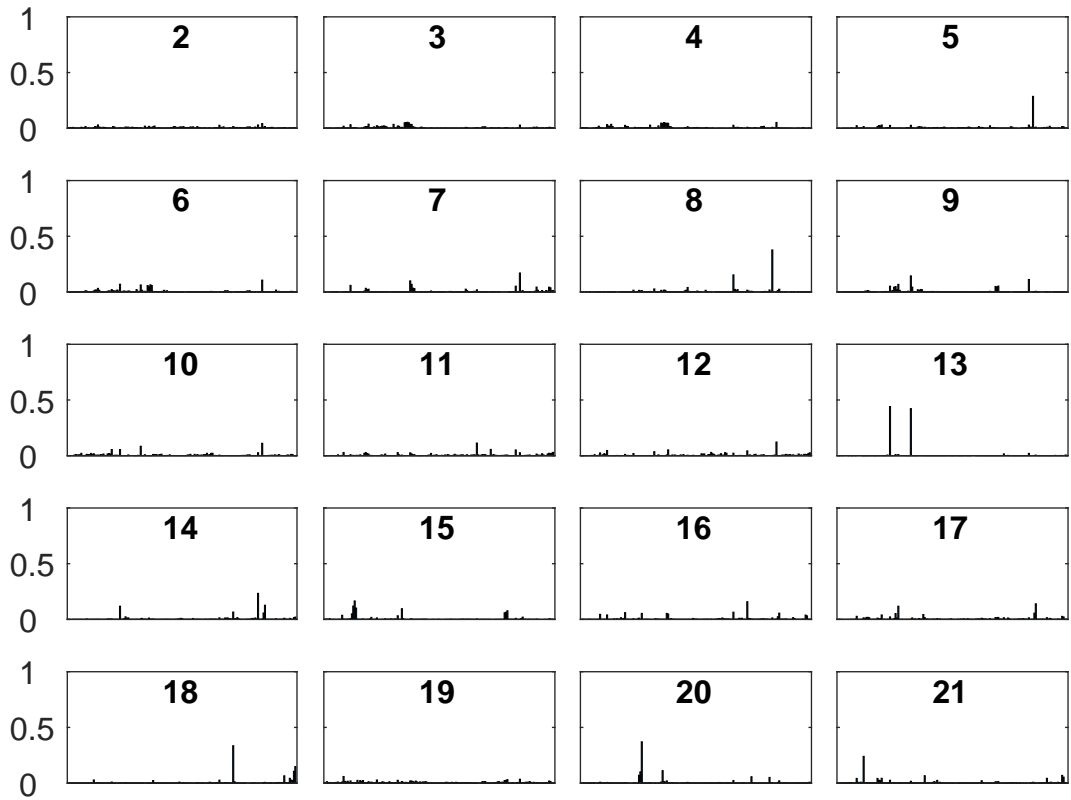,height=10.536cm,width= 14.107 cm,angle=0}
}
\caption{Histograms of the $L_2$ norm ratio $e_q(j)$ (\ref{ej})  for
$q=2,\dots 21$ from left to right and top to bottom for the Buffon graph.}
\label{lhbuf}
\end{figure}
The vectors $V^q$ for $q=5,8,13,18,20$ and $21$ are approximately localized. 
They are plotted below.
\begin{figure}[H]
\centerline{
\epsfig{file=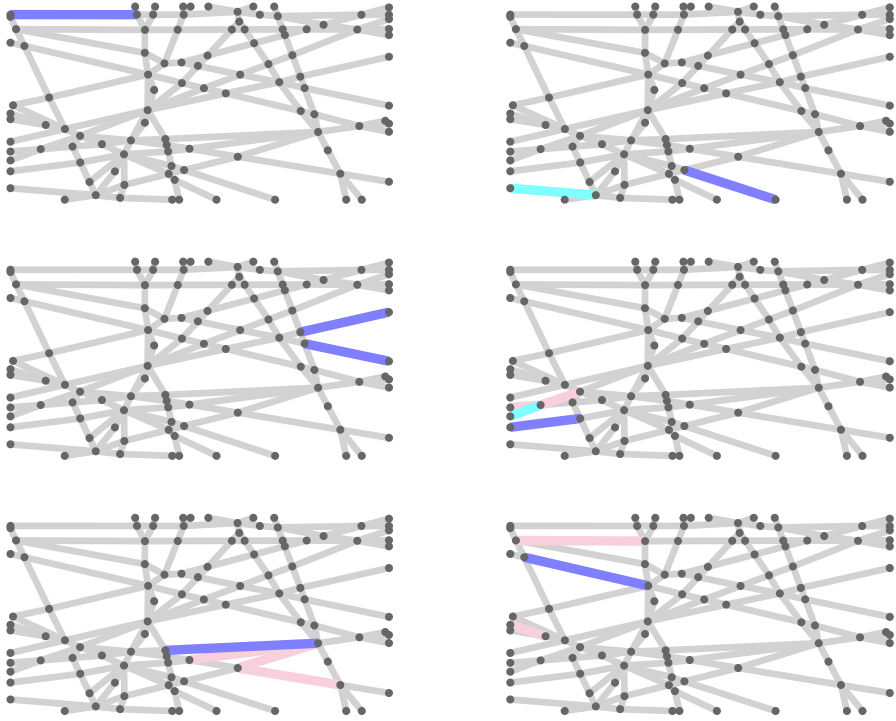,height=10.25 cm,width=13.357 cm,angle=0}
}
\caption{Plots of the localized eigenvectors $V^q$ for $q=5,8,13,18,20$ and $21$.
The color code is $ e_q(j) < 0.06$ (grey), $0.06 < e_q(j)< 0.12$ (pink),
$0.12 < e_q(j) < 0.2$ (cyan) and $0.2 < e_q(j)$ (blue).}
\label{locbuf}
\end{figure}
The corresponding values of $k_q$ are shown in Table \ref{tab5}
\begin{table} [H]
\centering
\begin{tabular}{|l|c|r|}
\hline
0.016866927 & 0.022719737 & 0.029337666 \\ \hline
0.034764362 & 0.036236231 & 0.036802227 \\ \hline
\end{tabular}
\caption{The values of $k_q$ shown in Fig. \ref{locbuf}.}
\label{tab5}
\end{table}

The results of this section illustrate that
for arbitrary metric graphs, approximately localized eigenvectors occur,
in particular we observed them for large $k$.
However, this localization is not exact. This is a known result, 
see the statement "there are no perfect scars for generic graphs" by
Schanz and Kotos \cite{sk03}. For the laser application \cite{Gaio},
a lasing phenomenon is therefore unlikely to appear in 
a random arrangement of nanometric waveguides.

For exactly localized eigenvectors to exist, we need a precise arrangement 
of the lengths of the arcs involved. We give these resonance conditions 
in the next section.

\section{Exactly localized eigenvectors}

We examine configurations
that lead to {\it exactly localized eigenvectors} 
following the definition (\ref{localevector}). In the rest of
this section we drop the adjective {\it exactly}.
We find that localized
eigenvectors exist on two connected leaves, as a 1-2 state in a pumpkin,
as a triangle 1-2-3 and a quadrilateral 1-2-3-4---see Table \ref{texact}.
The analysis also enables us to rule out single arc, leaf, two
connected arcs and degree three vertex localized eigenvectors---see Table \ref{tnoexact}.

We show the computations in detail for the 1-2-3 triangle.
Calculations for the other examples are given in the appendix.

\subsection{A localized eigenvector, the Triangle 1-2-3}

Consider the configuration of Fig. \ref{tri123} where a triangle 
of edges $l_1,l_2,l_3$ is embedded in a graph. 
\begin{figure}[H]
\epsfig{file=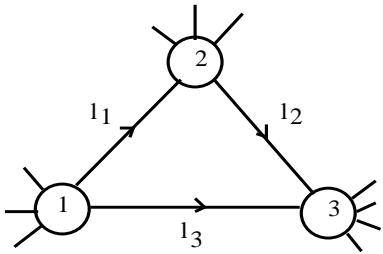,height=5 cm,width=12 cm,angle=0}
\caption{A triangle 1-2-3 embedded in a graph.}
\label{tri123}
\end{figure}
We have the following theorem.
\begin{theorem}
\label{tria123}
A localized eigenvector exists on a a triangle 
of edges $l_1,l_2,l_3$ embedded in a graph if 
there exists three integers $n_1,n_2,n_3$ such that
$$ {l_1 \over n_1}= {l_2 \over n_2}={l_3 \over n_3}$$ 
and $n_1 + n_2 + n_3$ is even.\\
The eigenvalue is $-({n_1 \pi \over l_1})^2$ and the eigenvector is
$$V= \sin k x (1 , (-1)^{n_1} , -1)^T $$
\end{theorem}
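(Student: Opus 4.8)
The plan is to turn localization into a \emph{Dirichlet} problem on the triangle. For an eigenvector supported only on the edges $l_1,l_2,l_3$, I would first set the field identically zero on every edge outside the triangle. Since the zero function solves (\ref{helm}) for any $k$ and contributes neither a value nor a flux at a vertex, it trivially satisfies continuity and (\ref{kirchof}) wherever it meets another vanishing edge; hence the entire ambient graph can be given the zero extension, and the only genuine constraints come from the three triangle vertices. At each such vertex the incident triangle edges must be continuous with the vanishing exterior field, which forces the triangle field to vanish at all three vertices, i.e. $v_j(0)=v_j(l_j)=0$ on each edge $j$. Writing $v_j=A_j\sin kx+B_j\cos kx$ as in (\ref{simple}), the condition $v_j(0)=0$ gives $B_j=0$, and $v_j(l_j)=0$ with $A_j\neq 0$ forces $\sin kl_j=0$, i.e. $kl_j=n_j\pi$ for a positive integer $n_j$. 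This is exactly the commensurability $l_1/n_1=l_2/n_2=l_3/n_3=\pi/k$, and it fixes the eigenvalue $-k^2=-(n_1\pi/l_1)^2$.

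Next I would extract the remaining content of (\ref{kirchof}). At each triangle vertex the attached exterior edges (of which there may be arbitrarily many, so the vertex degree need not be three) carry zero flux, so the sum of the outgoing derivatives of the two incident triangle edges must vanish. Using $v_j'=A_jk\cos kx$ together with $\cos(n_j\pi)=(-1)^{n_j}$, the outgoing derivative of edge $j$ equals $A_jk$ at its origin and $-(-1)^{n_j}A_jk$ at its terminus. Fixing the orientation of Fig.~\ref{tri123} and writing one Kirchhoff equation per vertex yields a $3\times 3$ homogeneous linear system for $(A_1,A_2,A_3)$; concretely one obtains relations of the form
\be
A_2=(-1)^{n_1}A_1,\qquad A_3=-A_1,\qquad (-1)^{n_2}A_2+(-1)^{n_3}A_3=0.
\ee
A localized eigenvector exists precisely when this system has a nontrivial solution, i.e. when its determinant vanishes. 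Substituting the first two relations into the third gives $A_1\big[(-1)^{n_1+n_2}-(-1)^{n_3}\big]=0$, so nontriviality requires $(-1)^{n_1+n_2}=(-1)^{n_3}$, that is, $n_1+n_2+n_3$ even. Normalizing $A_1=1$ then reads off the eigenvector $V=\sin kx\,(1,(-1)^{n_1},-1)^T$.

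I expect the main obstacle to be the sign and orientation bookkeeping that produces the parity condition, since the whole theorem hinges on the closing relation around the loop and an error in the direction of a single outgoing flux would corrupt both the eigenvector signs and the parity. I would therefore argue that the parity conclusion is orientation-independent: reversing an edge simultaneously flips the sign of its amplitude relation and exchanges its origin/terminus factors $1\leftrightarrow-(-1)^{n_j}$, so the determinant of the $3\times 3$ system, and hence the constraint $n_1+n_2+n_3$ even, is unchanged. The one remaining point I would explicitly check is admissibility of the global zero extension at the vertices where the triangle attaches to the rest of the graph; this holds automatically because, as noted above, the vanishing field contributes nothing to either continuity or the Kirchhoff sum, so no hidden constraint is imposed by the exterior.
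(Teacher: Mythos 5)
Your proof is correct and follows essentially the same route as the paper's: vanishing of the field at the three triangle vertices forces $B_j=0$ and $kl_j=n_j\pi$, and the per-vertex Kirchhoff balance yields exactly the amplitude relations $A_2=(-1)^{n_1}A_1$, $A_3=-A_1$, $(-1)^{n_2}A_2+(-1)^{n_3}A_3=0$, whose nontrivial solvability is precisely the parity condition $n_1+n_2+n_3$ even. Your additional checks (admissibility of the zero extension to the ambient graph and orientation-independence of the parity constraint) are sound elaborations of points the paper leaves implicit, not a different method.
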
 

\begin{proof}
On each edge $j$, we have $V_j = A_j \sin kx + B_j \sin kx$.
To have a localized eigenvector, we need that at each vertex $V=0$
and to balance the fluxes. These conditions are
\begin{eqnarray*}
V_1(0)=V_3(0)=0, \\ 
{V_1}_x(0)+ {V_3}_x(0)=0, \\
V_1(l_1)=V_2(0)=0, \\ 
{V_1}_x(l_1)- {V_2}_x(0)=0, \\
V_3(l_3)=V_2(l_2)=0, \\ 
{V_3}_x(l_3)+ {V_2}_x(l_2)=0,
\end{eqnarray*}
yielding
\begin{eqnarray*}
k l_1 = n_1 \pi, ~~k l_2 = n_2 \pi,~~k l_3 = n_3 \pi, \\
{l_1 \over l_2} = {n_1 \over n_2},~~{l_1 \over l_3} = {n_1 \over n_3} , \\
B_1 = B_2 = B_3 =0, \\
A_3 = -A_1 , \\
A_2 = A_1 c_1 , \\
A_2 c_2 + A_3 c_3 =0, 
\end{eqnarray*}
where $n_1,n_2,n_3$ are integers and $c_1 = \cos k l_1, \dots$.
Using the last relation, we get the condition
\be \label{tricon} (-1)^{n_1+n_2}= (-1)^{n_3},\ee
so that $n_1+n_2 +n_3$ is even. To summarize, we have
a triangle eigenvector if there exists four integers $n_0,n_1,n_2,n_3$ such that
\begin{eqnarray}
{l_1 \over n_1}= {l_2 \over n_2}={l_3 \over n_3} , \\
n_1+n_2 +n_3 = 2 n_0 .
\end{eqnarray}
The eigenvector is
$$V= \sin k x (1 , (-1)^{n_1} , -1)^T $$
\end{proof}

Fig. \ref{trir} shows such a state for $k=1$ in the G14 graph, 
with $l_6  = 2 \pi, ~l_7= 3\pi$ and $ l_{13} = 7\pi$.
\begin{figure}[H]
\centerline{
\epsfig{file=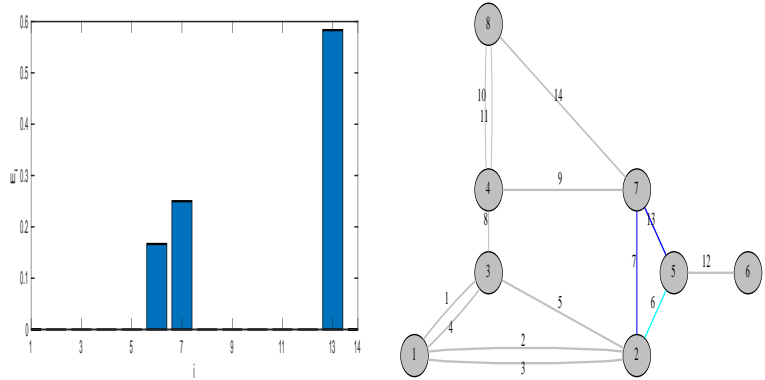, height= 10 cm, width = 5 cm, angle=90}
}
\caption{A triangle 1-2-3 "embedded" in a graph.  The active arcs are 
6,7 and 13. }
\label{trir}
\end{figure}
Here, we recover in a simple way the conditions obtained by Gnutzmann et 
al \cite{gss13} using scattering theory arguments. Similarly, we can
obtain a localized eigenvector on a quadrilateral or any polygon.
Fig. \ref{quad1234} shows such an exactly localized eigenvector
on the quadrilateral 5-7-8-9 for the G14 graph where we chose
$$l_5  = 2\pi, ~l_7  = 3\pi,~ l_8  = 5\pi,~ l_9  = 6\pi .$$
\begin{figure}[H]
\centerline{
\epsfig{file= 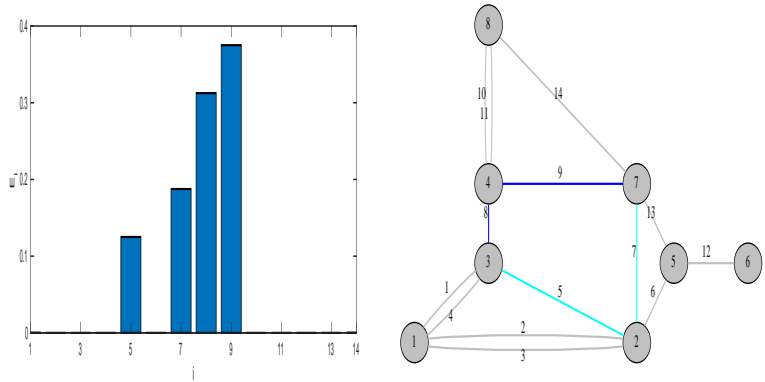, height= 10 cm, width = 5 cm, angle=90}
}
\caption{A quadrilateral 1-2-3-4 embedded in a graph. The
active arcs are 5,7,8 and 9. }
\label{quad1234}
\end{figure}

\subsection{1-2 localized eigenvector on a pumpkin subgraph}

We present here a localized eigenvector that is new to the best of our
knowledge. It is a 1-2 or more resonance in a pumpkin subgraph.
Pumpkin graphs were studied in detail by Berkolaiko \cite{berkolaiko2} 
who introduced this terminology.
\begin{definition}
An $m$-pumpkin graph consists
of two vertices and $m$ parallel edges of possibly different lengths running
between them.
\end{definition}
\begin{figure}[H]
\centerline{
\epsfig{file=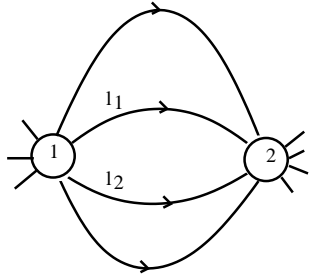,height=4.14 cm,width=9.61 cm,angle=0}
}
\caption{A 2-pumpkin 1-2 embedded in a graph.}
\label{pum12}
\end{figure}
Consider the configuration of Fig. \ref{pum12} where a pumpkin is
embedded in a graph. For certain edge lengths $l_1,l_2$, there exists
a localized eigenvector on these two edges.
\begin{theorem}
\label{pump12}
A localized eigenvector exists on a pumpkin subgraph of a metric graph 
with at least two edges $l_1,l_2$ if there are two integers $n_1,n_2$
of same parity such that
$${l_1 \over n_1}= {l_2 \over n_2}  . $$
\end{theorem}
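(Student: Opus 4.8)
The plan is to mirror the computation carried out for the triangle, since a $2$-pumpkin is structurally even simpler: it consists of the two shared vertices, say $P$ and $Q$, joined by the two edges $l_1$ and $l_2$. When more than two parallel edges are present we will localize on exactly two of them, setting the field identically to zero on the remaining parallel edges and on the rest of the embedding graph. On each active edge I write $V_j = A_j \sin kx + B_j \cos kx$ and parameterize both edges from $P$ (at $x=0$) to $Q$ (at $x=l_j$). The localization requirement is that the eigenvector vanish at both $P$ and $Q$; this guarantees continuity with the zero field outside and annihilates any flux contribution from the inactive edges at these two vertices.

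First I would impose the vanishing conditions. From $V_1(0)=V_2(0)=0$ I obtain $B_1=B_2=0$, so each component reduces to $A_j \sin kx$. Imposing $V_1(l_1)=V_2(l_2)=0$ with $A_1,A_2\neq 0$ then forces the quantization $kl_1=n_1\pi$ and $kl_2=n_2\pi$ for integers $n_1,n_2$, which is exactly the commensurability condition $l_1/n_1 = l_2/n_2 = \pi/k$ of the statement and fixes the eigenvalue as $-k^2=-(n_1\pi/l_1)^2$, in parallel with Theorem \ref{tria123}.

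Next I would write the two Kirchhoff flux balances. At $P$, where both edges are origins, the condition ${V_1}_x(0)+{V_2}_x(0)=0$ gives $A_1 k + A_2 k = 0$, i.e.\ $A_2=-A_1$. At $Q$, where both edges terminate, the balance ${V_1}_x(l_1)+{V_2}_x(l_2)=0$ together with $\cos(n_j\pi)=(-1)^{n_j}$ yields $A_1(-1)^{n_1}+A_2(-1)^{n_2}=0$. Substituting $A_2=-A_1$ and using $A_1\neq 0$ collapses this to $(-1)^{n_1}=(-1)^{n_2}$, i.e.\ $n_1$ and $n_2$ have the same parity. This is precisely the extra hypothesis in the theorem, and the resulting eigenvector is $V=\sin kx\,(1,-1)^T$ on the two active edges and zero elsewhere.

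The one point requiring care—and the step I would double-check most—is the sign bookkeeping in the Kirchhoff conditions, since it depends on the chosen orientation of the two parallel edges (origin versus terminal at each shared vertex). I would verify that reversing the orientation of one edge merely relabels the flux signs and, after recombining, reproduces the same parity constraint that $n_1+n_2$ be even; this robustness is what makes the condition intrinsic to the pumpkin rather than an artifact of the parameterization. The remaining detail is to confirm that, because the field vanishes at $P$ and $Q$, the eigenvector is genuinely decoupled from the ambient graph, so that no interface condition is violated on the edges external to the pumpkin, and the construction extends verbatim to any $m$-pumpkin with $m\ge 2$.
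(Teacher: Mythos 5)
Your proposal is correct and follows essentially the same route as the paper's proof: impose vanishing of $V$ at both pumpkin vertices (giving $B_1=B_2=0$ and the quantization $kl_j=n_j\pi$), then use the two Kirchhoff balances to force $A_2=-A_1$ and $(-1)^{n_1}=(-1)^{n_2}$, i.e.\ the same-parity condition. Your added remarks on orientation bookkeeping and on decoupling from the ambient graph are sound but do not change the argument.
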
 

\begin{proof}
As for the triangle, we want $V$ to be zero at the vertices and to balance the
flux, then
\begin{eqnarray*}
V_1(0)=V_2(0)=0, \\ 
V_1(l_1)=V_2(l_2)=0, \\ 
{V_1}_x(0)+ {V_2}_x(0)=0, \\
{V_1}_x(l_1)+ {V_2}_x(l_2)=0  .
\end{eqnarray*}
This yields the system of equations
\begin{eqnarray*}
B_1=B_2=0, \\
A_1 \sin k l_1 =A_2 \sin k l_2 =0, \\
A_1+A_2=0, \\
A_1 \cos k l_1  +A_2 \cos k l_2 =0.
\end{eqnarray*}
We then obtain
$$\sin k l_1 =0, ~~\sin k l_2 =0 , $$
so that $$k l_1 = n_1 \pi, ~~k l_2 = n_2 \pi ,$$
where $n_1,n_2$ are integers. Then
$\cos k l_1 = (-1)^{n_1}, ~~~ \cos k l_2= (-1)^{n_2}$.
A non-trivial solution $A_1,A_2$ exists only if $\cos k l_1=\cos k l_2$
so that $n_1$ and $n_2$ have the same parity.
The condition on the lengths is then 
\be \label{cpum12}
{l_1 \over n_1}= {l_2 \over n_2}  , \ee
where $n_1 , n_2$ are integers of same parity.
\end{proof}

To illustrate this localized eigenvector, consider the graph shown in
Fig. \ref{pum3}. 
\begin{figure}[H]
\centerline{
\epsfig{file=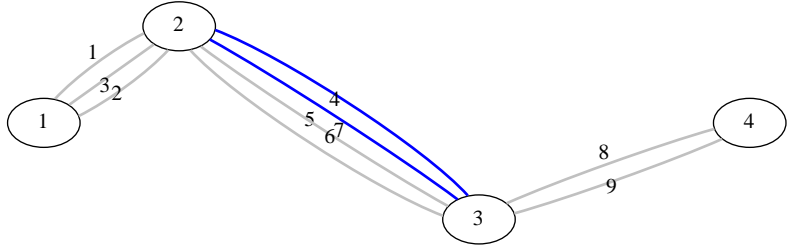,height=3.375 cm,width=10.5 cm,angle=0}
}
\caption{A graph showing a localized 1-2 pumpkin eigenvector.}
\label{pum3}
\end{figure}
The lengths are given in Table \ref{tpum3}, where the arcs 4 and 7 satisfy the
resonance condition (\ref{cpum12}). The localized eigenvector on
arcs 4 and 7 is shown with the color code $ e_q(j) < 0.06$ (grey), $0.06 < e_q(j)< 0.12$ (pink),
$0.12 < e_q(j) < 0.2$ (cyan) and $0.2 < e_q(j)$ (blue).
\begin{table} [H]
\centering
\begin{tabular}{|l|c|c|c|r|}
\hline
$l_1$        &  $l_2$      &  $l_3$       &  $l_4$       &  $l_5$   \\
2.236067977 &  1.414213562 &  1.732050807 &  $\pi$       &  $11~ \pi$  \\
             &             &              &              &   \\ \hline
$l_6$        &  $l_7$      &  $l_8$       &  $l_9$       &    \\
5.167771571  &  9.424777960& 3.605551275  &  5.693156148 &    \\
             &             &              &              &   \\ \hline
\end{tabular}
\caption{The lengths $l_i$ for the graph shown in Fig. \ref{pum3}.}
\label{tpum3}
\end{table}

Similarly, one can have a localized eigenvector on a 3-pumpkin. The derivation
is similar to the 2-pumpkin.
We obtain the conditions 
$$A_1 + A_2 + A_3 = 0, $$
$${l_1 \over n_1}= {l_2 \over n_2}= {l_3 \over n_3}$$ 
where $n_1,~n_2$ and $n_3$ are integers of the same parity. Note 
that the eigenspace has dimension 2.

To conclude this section, Table \ref{texact} gives four 
configurations giving localized eigenvectors. 
\begin{table} [H]
\centering
\begin{tabular}{|c|c|c|c|}
\hline
2 connected & 1-2 eigenvector & triangle 1-2-3&  quadrilateral 1-2-3-4 \\
leaves & in a pumpkin & &   \\\hline
${l_1  \over n_1}={l_2 \over n_2}$ & ${l_1  \over n_1}={l_2 \over n_2}$ & ${l_1  \over n_1}={l_2 \over n_2}={l_3 \over n_3}$  & ${l_1  \over n_1}={l_2 \over n_2}={l_3 \over n_3} ={l_4 \over n_4}$ \\ 
$n_1,n_2$ odd integers & $n_1,n_2$ integers & $n_1,n_2,n_3$ integers  & $n_1,n_2,n_3,n_4$ integers \\ 
 & same parity & $n_1+n_2+n_3$ even & $n_1+n_2+n_3+n_4$ even \\
\hline
\end{tabular}
\caption{Four configurations giving localized eigenvectors and conditions for the
lengths of the arcs.} 
\label{texact}
\end{table}

{\bf Connecting eigenvectors}

Two elementary graphs corresponding to
localized eigenvectors for the same eigenvalue can be connected to yield a 
composite graph for the same eigenvalue. We have the following.
\begin{theorem}
Consider two elementary graphs $G_1,G_2$ corresponding to localized eigenvectors
of the generalized Laplacian for the same eigenvalue. Then, the composite graph
obtained by joining a vertex from $G_1$ to a vertex from $G_2$ has the same
eigenvalue.
\end{theorem}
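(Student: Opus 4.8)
The plan is to build the eigenvector on the composite graph by gluing the two given localized eigenvectors at a pair of vertices where they both vanish. That such vanishing vertices exist is the heart of the matter: the localized eigenvectors produced in this section are zero at their vertices, as the proofs of Theorems \ref{tria123} and \ref{pump12} make explicit, where the very first conditions imposed are $V_1(0)=V_3(0)=0$ and the like. Let $V^{(1)},V^{(2)}$ be the localized eigenvectors of $G_1,G_2$ for the common eigenvalue $-k^2$; pick a vertex $v_1$ of $G_1$ with $V^{(1)}(v_1)=0$ and a vertex $v_2$ of $G_2$ with $V^{(2)}(v_2)=0$, and form the composite graph by identifying $v_1$ with $v_2$ into a single vertex $v$. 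On it I define $V$ to equal $V^{(1)}$ on the edges inherited from $G_1$ and $V^{(2)}$ on the edges inherited from $G_2$.

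First I would verify the Helmholtz equation and continuity. On each edge $V$ agrees with one of $V^{(1)},V^{(2)}$, so it solves the restriction of (\ref{helm}), namely $-{V_j}_{xx}=k^2 V_j$, with the \emph{same} $k$ throughout; this is exactly where the hypothesis of a common eigenvalue is used, since otherwise no single $k$ would serve. Continuity is inherited from $V^{(1)}$ and $V^{(2)}$ at every vertex other than $v$, and at $v$ the limits along the $G_1$-edges and along the $G_2$-edges are $V^{(1)}(v_1)=0$ and $V^{(2)}(v_2)=0$, which coincide, so continuity holds there as well.

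The only step that is not automatic, and the one I would treat as the crux, is the Kirchhoff balance (\ref{kirchof}) at the merged vertex $v$. There the outgoing fluxes split into a $G_1$-part and a $G_2$-part, and the key observation is that each part already vanishes separately: the fluxes of $V^{(1)}$ over the edges meeting $v_1$ sum to zero because $V^{(1)}$ obeys Kirchhoff at $v_1$ within $G_1$ (any edge on which $V^{(1)}$ vanishes contributes no flux, so the balance reduces to the one recorded in the proof of Theorem \ref{tria123} or \ref{pump12}), and symmetrically for $V^{(2)}$ at $v_2$. Hence the total flux at $v$ is $0+0=0$, condition (\ref{kirchof}) is met, and $V$ is a genuine eigenvector of the composite graph for $-k^2$; since it is nonzero on both $G_1$ and $G_2$ it is precisely the larger localized eigenvector sought. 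I would close with two remarks. The argument is unchanged if the two graphs are joined instead by a bridge edge on which $V$ is set identically to zero, because a vanishing field contributes neither a value nor a flux at its endpoints. And the hypothesis that $v_1,v_2$ be vanishing vertices is genuinely needed: for two connected leaves the eigenvector does not vanish at a leaf tip, so a junction there would violate continuity, and the statement must be read with the join taken at such Dirichlet vertices.
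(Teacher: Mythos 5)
Your proposal is correct and follows essentially the same route as the paper's own proof: glue the two eigenvectors at the identified vertex, note that continuity holds there because both vanish, and that the Kirchhoff fluxes balance separately on each subgraph so their sum at the merged vertex is zero. Your write-up is in fact more careful than the paper's three-line argument, which flatly asserts the components are zero ``at each vertex'' of $G_1$ and $G_2$; your explicit caveat that the join must be made at a vanishing (Dirichlet) vertex --- ruling out, e.g., the tip of a two-leaf eigenvector, where the eigenvector is nonzero and only its derivative vanishes --- is a genuine refinement of a point the paper glosses over.
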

\begin{proof}

The proof is elementary. The eigenvector components $V^1_j,V^2_k$ are 
zero at each vertex of $G_1$ and $G_2$ respectively so that the 
zero condition is satisfied for both subgraphs.

Since the
fluxes are balanced separately for  $G_1$ and $G_2$, they will be balanced
for the composite graph. This shows that the union of the eigenvectors
$V^1,V^2$ is a localized eigenvector for the composite graph.
\end{proof}

A consequence of this result is that the composite graph can be a subgraph
of a large graph and have the same eigenvalue as long as there are 
no "external" edges i.e. not belonging to $G_1$ and $G_2$.

\subsection{No single arc eigenvector}

The methodology given above also allows us to rule out
geometric situations where no localized eigenvector exists. 
As an example, consider a single arc.
\begin{figure} [H]
\centerline{
\epsfig{file=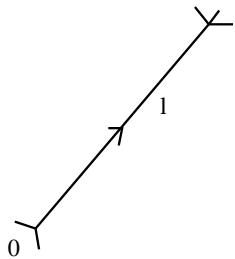,height=4cm,width=4 cm,angle=0}
}
\caption{An arc embedded in a graph. }
\label{arc}
\end{figure}
We have the following theorem
\begin{theorem}
A localized eigenvector of the generalized Laplacian
 cannot exist on an arc embedded in a metric graph.
\end{theorem}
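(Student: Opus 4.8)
The plan is to mirror the bookkeeping used for the triangle and the pumpkin, but to show that the resulting linear system admits only the trivial solution. I would write the unknown field on the single arc as $V_1 = A_1 \sin kx + B_1 \cos kx$ and set $V_j \equiv 0$ on every other edge of the graph, since this is exactly what ``localized on the arc'' means by Definition \ref{localevector}. The whole argument then hinges on one observation: every neighboring edge carries the identically zero function, so both its value and its derivative vanish at the vertices shared with the arc.

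Concretely, I would focus on a single endpoint of the arc, say the origin vertex $x=0$. Continuity of the field across this vertex, combined with the fact that every other incident edge carries the zero function, forces $V_1(0)=0$, i.e. $B_1=0$. I would then apply the Kirchhoff flux condition (\ref{kirchof}) at the same vertex: the outgoing fluxes of all the other incident edges vanish because those edges are identically zero, so the sum collapses to the single term $\pm{V_1}_x(0)=0$, which gives $A_1 k = 0$. Since $k\neq 0$ for a nontrivial eigenvalue, this yields $A_1=0$.

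With $A_1=B_1=0$ we get $V_1\equiv 0$, so the candidate is the zero vector, contradicting the requirement that a localized eigenvector be nonzero on its support; hence no such eigenvector exists. Equivalently, the pair $V_1(0)=0$ and ${V_1}_x(0)=0$ constitutes zero Cauchy data for the second-order equation $-{V_1}''=k^2 V_1$, whose unique solution is trivial. I prefer this phrasing because it makes transparent that a \emph{single} endpoint already over-determines the problem, so that, unlike in the triangle or pumpkin, no tuning of the lengths $l_1$ can produce a resonance.

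The trigonometric bookkeeping is immediate, so the only point that needs care is the reduction of the Kirchhoff sum to the lone term ${V_1}_x(0)$. This step relies on the arc being genuinely embedded, so that its endpoint is a vertex with other, zero-valued, incident edges; the sign ambiguity in $\pm{V_1}_x(0)$ is immaterial since the quantity is set to zero. I would also note in passing that a self-loop, where the two endpoints are identified, is excluded here, as that configuration is really the simplest cycle and belongs to the polygon analysis rather than to the single-arc case. This is the main, and rather mild, obstacle.
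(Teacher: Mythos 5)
Your proof is correct and takes essentially the same approach as the paper: write $V_1 = A_1 \sin kx + B_1 \cos kx$ on the arc, note that localization forces zero value (continuity) and zero flux (Kirchhoff) at the endpoints, and conclude $A_1=B_1=0$. The only difference is that you observe a single endpoint's Cauchy data already suffices, whereas the paper lists the conditions at both ends; this is a harmless refinement, not a different argument.
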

\begin{proof}
To show this, consider Fig. \ref{arc}. The localization
conditions for the solution $$V = A \sin k x + B \cos kx $$
are
$$V(0)=V(l)=0,~~V(0)_x=V_x(l)=0$$
This yields $A=B=0$ so that a localized eigenvector cannot exist.
\end{proof}

This is a known result \cite{sk03}.

Table \ref{tnoexact} gives four configurations where 
localized eigenvectors do not exist. Details of the calculations
are given in the appendix.
\begin{table} [H]
\centering
\begin{tabular}{|c|c|c|c|}
\hline
single & leaf & two connected &  degree 3 vertex \\
arc & eigenvector & arcs &  eigenvector \\
\hline
\end{tabular}
\caption{Four configurations where a localized eigenvector cannot exist.}
\label{tnoexact}
\end{table}

\section{Discussion: localization criterion and excitation}

The histograms of the energy components $\langle V^q_j,V^q_j \rangle$ 
of the localized eigenvectors show differences despite
the fact that the $A$ coefficients are the same. 
This imbalance is due to the different lengths of the arcs $j$ because
$\langle V^q_j,V^q_j \rangle$ scales like $l_j$---see (\ref{vpvp}).

To correct the imbalance, we use equation (\ref{ej}) and 
rescale $\langle V^q_j,V^q_j \rangle$ by $l_j$. For that we
introduce the localization criterion for an eigenvector $V^q$ as
\be\label{loc}
{\cal L}_q  = {\rm max}_j~ {E^q_j} , \ee
where
\be \label{eqj}
{E^q_j} \equiv  { \langle V^q_j,V^q_j \rangle   \over  l_j \sum_k {\langle  V^q_k,V^q_k \rangle  \over  l_k} }  \ee
where we use the graph norm (\ref{vivi}).

The calculations of the previous section for the
two leaf, the triangle and the quadrilateral localized eigenvectors can be
used to exactly compute ${\cal L}_q$. The results are shown in Table
\ref{tabloc}.
To illustrate the usefulness of $E^q_j$, observe that the histograms presented in the left panels of
Figs. \ref{trir} and \ref{quad1234} will have all the same amplitude if $E^q_j$ is used instead of $e_q(j)$ (\ref{ej}); the amplitudes will be 1/3 
and 1/4 respectively, see Table \ref{tabloc}.  
This shows that $1/ {\cal L}_q$ gives the number of active arcs.
The analysis of the previous section shows that an eigenvector is localized on at least two arcs with equal amplitudes
$A_1=A_2$. 
Then a general upper bound is 
$${\cal L}_q \leq 0.5 $$
The quantity $E^q_j$ is the energy density of edge $j$.
When applied to arbitrary metric graphs such as the G14, it indicates
the regions of the graph that are most active.

In contrast, the standard localization criterion 
\be\label{ipr} IPR_q = { \sum_{j=1}^m   \int_0^{l_j} |V^q_j|^4 dx \over 
(\sum_{j=1}^m \int_0^{l_j} |V^q_j|^2 dx )^2} , \ee
used for example by Gaio \cite{Gaio} does not give such precise 
information on the active arcs, as 
it is not based on localized eigenvectors. Table \ref{tabloc}
shows the IPR for the 2-leaf, the triangle and the quadrilateral together
with our estimate ${\cal L}_q$ given by (\ref{loc}). The former gives
the number of active lengths and the latter an estimate of the sum
of the lengths of the edges on which energy is concentrated.
\begin{table} [H]
\centering
\begin{tabular}{|l|c|c|r|}
\hline
                & two leaf    &  triangle    &  quadrilateral   \\  \hline
                &             &              &           \\
${\cal L}_q$    &  1/2        &  1/3         &  1/4      \\ 
                &             &              &           \\
IPR             &  ${3 \over 2(l_1+l_2)}$        &  ${3 \over 2(l_1+l_2+l_3)}$         &  ${3 \over 2(l_1+l_2+l_3+l_4)}$       \\ 
                &             &              &           \\
\hline
\end{tabular}
\caption{Localization criterion ${\cal L}_q$ and Inverse Participation
Ratio (\ref{ipr})  for three localized eigenvectors. }
\label{tabloc}
\end{table}

\subsection{Exciting a localized eigenvector by a broadband pulse}

An important practical issue is how to excite these localized eigenvectors.
For the random fiber laser, the authors of \cite{Gaio,Cipolato}
send an electromagnetic pulse on a region of the network. They
also couple the arcs on the boundary to an outside circuit
to let energy escape. 
Then they expect that the only energy
remaining will be that corresponding to the localized 
eigenvectors.

This argument is correct in principle. To confirm it, consider
the eigenvectors shown in Fig. \ref{lg14}. All localized modes are
away from the vertex 6 so the components at that vertex are
exponentially small.  
Assume a Sommerfeld radiation condition at that vertex,
\be \label{radcon} \epsilon U_t = U_x . \ee
Then the boundary condition there becomes
$$\epsilon {V_j}_x + i k V_j = 0 , $$
which is easily satisfied if ${V_j}_x=V_j=0.$
This simple argument shows that localized eigenvectors
inside the graph
will be preserved when the boundaries of the
network are coupled to a dissipation source.

We illustrate this numerically on the G14 metric graph, using the
finite difference code studied in the article \cite{method}. 
We formed a localized eigenvector on the triangle defined by the
arcs 1,3 and 5 for 
the 18th eigenvalue corresponding to $k=1.133761002$. For that we chose
the lengths of the arcs 1,3 and 5, 
$$l_1 = {4 \pi \over  k} \approx 11.451671,  ~~~
l_3=l_5= {2 \pi \over  k} \approx 5.909775  . $$ 
Vertex 6 has a transparent
boundary condition (\ref{radcon}) ($\epsilon=1$). The other external vertices
have Neuman boundary conditions.
We solve the generalized wave equation using the numerical procedure detailed
in \cite{method} and plot in Fig. \ref{g14wave} two snapshots of the time 
evolution of the components $U_j(x,t)$ on each arc $j$. 
At time $t=0$, $U$ is a gaussian on edge 5 and zero everywhere else, with
1 as initial velocity. The left panel shows
a short time $t=7 ~ 10^4$ and the middle panel a much longer time $t=4 ~ 10^6$.
For the former, the solution is still in a transient state while the 
latter indicates that we reached an asymptotic state corresponding 
to the localized eigenvector on the triangle 1-3-5. There the maximum of
the solution is $2~10^{-2}$ while it is $3 ~10^{-3}$ on the other arcs.
The histogram of
the energies shown in the right panel of Fig. \ref{g14wave} 
shows that the energy is concentrated on the arcs 1-3-5. The energy
of $E_{13}$ is due to the large length of that arc $l_{13}=22$.
\begin{figure}[H]
\centerline{
\epsfig{file= 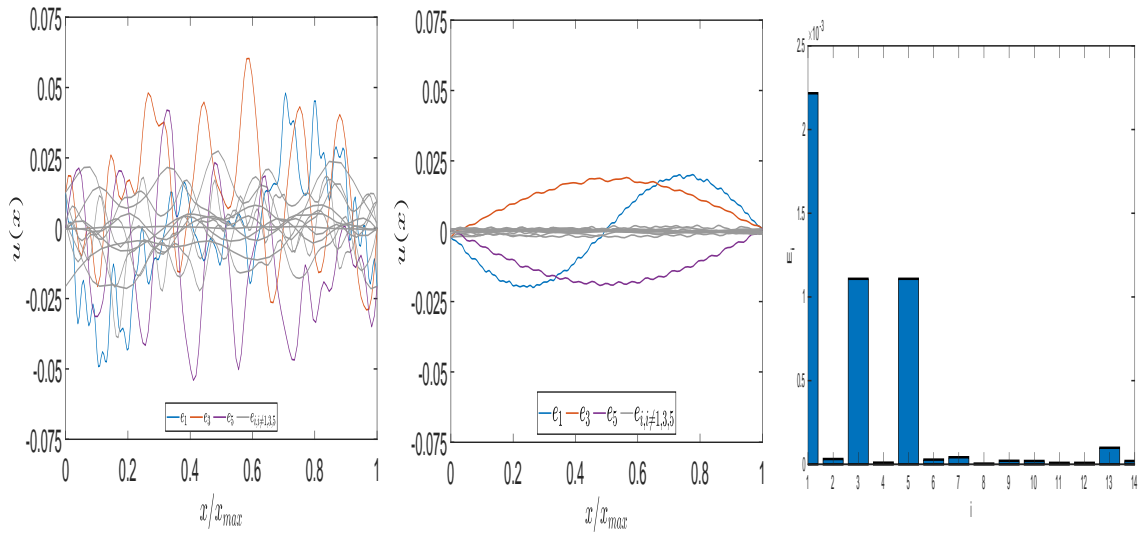, height= 15 cm, width = 7 cm, angle=90}
}
\caption{Numerical solution of the wave equation on the G14 metric graph for
a broadband initial condition and a transparent boundary condition (\ref{radcon})
at vertex 6. The left and middle panels show
snapshots of the solution components $U_j(x,t)$ on each arc $j$ 
for respectively $t=7 ~ 10^4$ and $t=4 ~ 10^6$.
The right panel shows a histogram of the energies $E_j$ on each edge $j$.}
\label{g14wave}
\end{figure}

\section{Conclusion}

In this article we applied our algorithm to study the eigenvectors
of two metric graphs arising in the modeling of the electrical grid and
in a model of a random laser. We find that localized eigenvectors occur
rarely and that the network needs to be tuned specifically for this.

We describe precise resonance conditions on the lengths of the arcs
to obtain exactly localized eigenvectors. Some of these results were
known, new results are the 1-2 arc cycle eigenvector in a pumpkin 
and the connection of localized eigenvectors to form a larger localized
eigenvector.

We define a new localization criterion based on the $L_2$ norm which gives
the number of active edges in an eigenvector, this quantity cannot be
obtained from the standard IPR criterion. An important question is how
to excite these localized eigenvectors?
To answer this we showed, using the time dependent wave equation with 
a leaky boundary, that
a localized eigenvector gets naturally excited in the long term 
from a broadband initial condition.

For the electrical grid application, even approximately localized eigenvectors
can damage equipment so it should be reinforced in the regions
of maximal amplitude of these localized eigenvectors. 
For a laser, a random arrangement of wave guides graph will not in general 
lead to a lasing phenomenon. A laser should be built by associating 
structures corresponding to localized eigenvectors, not random links.

In the future, we plan to extend this study to nonlinear effects and 
examine the stability of the resonance to perturbations (quality factor).
We will also examine non-Kirchhoff coupling conditions.

\section*{Acknowledgements}
JGC thanks the Mathematics Department at the 
University of Arizona for its hospitality during the spring semesters 2022
and 2023. 
He is grateful to the Gaspard Monge foundation for support. HK 
thanks the ARCS Foundation for support.

\appendix

\section{Exactly localized eigenvectors}

\subsection{Two connected leafs}

Two connected leaves form the structure shown in Fig. \ref{tleaf}.
\begin{figure} [H]
\centerline{
\epsfig{file=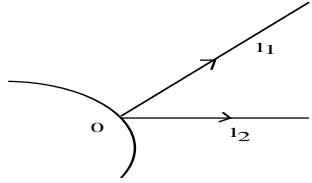,height=3 cm,width=6 cm,angle=0}
}
\caption{Two connected leaves in a graph. }
\label{tleaf}
\end{figure}
\begin{theorem}
A localized eigenvector of the generalized Laplacian exists on two
connected leaves of lengths $l_1,l_2$ if there exists two integers
$p,q$ such that
$$(2 p+1) l_1 - (2 q +1) l_2 =0 .$$
\end{theorem}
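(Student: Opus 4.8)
The plan is to mirror the linear-algebra computation used for the triangle and the pumpkin, adapting only the vertex conditions to the leaf geometry. First I would put the Fourier ansatz $V_j = A_j \sin kx + B_j \cos kx$ on each of the two leaf edges, orienting both so that $x=0$ sits at the shared interior vertex and $x=l_j$ at the free (degree-one) endpoint. Localization means $V$ vanishes identically on every edge outside the two leaves; since the solution is continuous at the shared vertex, this forces $V_1(0)=V_2(0)=0$, and the Kirchhoff condition there collapses to a balance among only the two leaf fluxes (the outside fluxes being zero), namely ${V_1}_x(0)+{V_2}_x(0)=0$.

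The key structural difference from the triangle is the behaviour at the free endpoints. Each such endpoint is a degree-one vertex, so its Kirchhoff condition is a single-term sum and degenerates to a Neumann condition ${V_j}_x(l_j)=0$, rather than the field-vanishing condition imposed at interior vertices. I would then read off the resulting homogeneous system: $V_1(0)=V_2(0)=0$ gives $B_1=B_2=0$, so $V_j = A_j \sin kx$; the endpoint conditions give $A_j\,k\cos k l_j = 0$; and the flux balance gives $A_1+A_2=0$, i.e. $A_2=-A_1$.

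Next I would extract the resonance condition by demanding a nontrivial solution. Since $A_1\neq 0$ (otherwise $A_2=0$ and the eigenvector is trivial), the endpoint relations force $\cos k l_1 = \cos k l_2 = 0$, so $kl_1$ and $kl_2$ must each be odd multiples of $\pi/2$: $kl_1=(2q+1)\tfrac{\pi}{2}$ and $kl_2=(2p+1)\tfrac{\pi}{2}$ for integers $p,q$. Eliminating $k$ between these two relations yields $(2p+1)\,l_1=(2q+1)\,l_2$, which is exactly the stated condition $(2p+1)l_1-(2q+1)l_2=0$. The eigenvalue is then $-k^2$ and the localized eigenvector is $V=\sin kx\,(1,-1)^T$ on the two leaves and zero on the rest of the graph.

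I expect no genuine analytic obstacle here; the only thing to get right is the bookkeeping of the vertex conditions. Specifically, one must correctly recognize that the degree-one endpoints impose Neumann (zero-flux) conditions while the shared vertex imposes vanishing of the field, and one must track the orientation signs in the flux balance so that the balance reads $A_1+A_2=0$. Once these are fixed the system is linear and the odd-multiple-of-$\pi/2$ structure drops out immediately; the single point where nontriviality must be invoked is in passing from $A_j\cos k l_j = 0$ to $\cos k l_j = 0$, which is what produces the odd integers $2p+1,\,2q+1$ in the length relation.
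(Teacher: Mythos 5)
Your proposal is correct and follows essentially the same route as the paper: the same ansatz, the same localization conditions ($V_1(0)=V_2(0)=0$, flux balance at the shared vertex, Neumann conditions ${V_j}_x(l_j)=0$ at the free ends), the same reduction to $B_1=B_2=0$, $A_1+A_2=0$, $\cos kl_1=\cos kl_2=0$, and the same elimination of $k$ to obtain $(2p+1)l_1-(2q+1)l_2=0$. Your write-up is in fact slightly more explicit than the paper's, since you justify the Neumann conditions as degenerate Kirchhoff conditions at degree-one vertices, invoke nontriviality to pass from $A_j\cos kl_j=0$ to $\cos kl_j=0$, and exhibit the eigenvector $V=\sin kx\,(1,-1)^T$.
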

\begin{proof}
An eigenvector localized on the two leaves satisfies the
following conditions on the two eigenvector components, \\
$ V_i = A_i \sin k x + B_i \cos k x , ~~~i=1,2$
\begin{eqnarray*}
V_1(0)=V_2(0)=0, \\ 
{V_1}_x(0)+ {V_2}_x(0)=0, \\
{V_1}_x(l_1)=0, \\
{V_2}_x(l_2)=0,
\end{eqnarray*}
From this system of equations we get
\begin{eqnarray*}
B_1 = B_2 =0, \\
A_1+ A_2 =0, \\
\cos k l_1 =0, \\
\cos k l_2 =0,
\end{eqnarray*}
so that
$$k l_1 = (2p +1)  { \pi \over 2 } , ~~~~~k l_2 = (2q +1)  { \pi \over 2 } $$
These conditions are satisfied if $l_1, l_2$ verify
$$(2 p+1) l_1 - (2 q +1) l_2 =0 $$
where $p,q$ are integers.
\end{proof}

Exactly localized eigenvectors also
exist when there are three or more leaves. We sketch the situation for three leaves 
and give the result for $L$ leaves.  \\
Assume there are three leaves. The conditions are then
\begin{eqnarray*}
B_1=B_2=B_3=0,~~A_1+A_2+A_3=0, \\
\cos k l_1 = \cos k l_2=\cos k l_3=0 .
\end{eqnarray*}
From this system we obtain the constraints on the lengths
\begin{eqnarray*}
 (2 p_1 + 1) l_1 - (2 q_1 + 1) l_2 = 0, \\
 (2 p_2 + 1) l_1 - (2 q_2 + 1) l_3 = 0, \\
 (2 p_3 + 1) l_2 - (2 q_3 + 1) l_3 = 0, 
\end{eqnarray*}
where $p_1,p_2,p_3,q_1,q_2,q_3$ are integers.
Note that the eigenspace has dimension 2.

For $L$ connected leaves, we would get an eigenspace of dimension
$L-1$ and $C_L^2$ constraints defining $k$.


%
%
%

\section{Configurations with no exactly localized eigenvectors}

\subsection{No leaf localized eigenvector}

A leaf is an arc such that its end vertex has degree 1, see Fig. \ref{leaf}.
\begin{figure} [H]
\centerline{
\epsfig{file=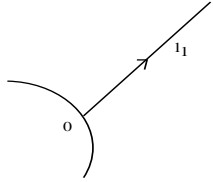,height=3. cm,width=4.05 cm,angle=0}
}
\caption{A leaf in a graph.}
\label{leaf}
\end{figure}
\begin{theorem}
There are no localized eigenvectors on leaves of a metric graph.
\end{theorem}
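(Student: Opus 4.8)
The plan is to mirror the single-arc argument of the preceding theorem, exploiting the fact that at the vertex where the leaf joins the rest of the graph the localized field must carry vanishing Cauchy data. First I would fix coordinates on the leaf, writing $V = A \sin kx + B \cos kx$ on $[0,l]$ with $x=0$ at the interior ``connecting'' vertex (where the leaf meets the remainder of the graph) and $x=l$ at the degree-one free vertex. We may assume $k \neq 0$, since the $k=0$ eigenfunction is a global constant and therefore not localized in the sense of Definition~\ref{localevector}.

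Next I would translate the localization hypothesis into boundary conditions at the two endpoints of the leaf. By Definition~\ref{localevector}, $V$ vanishes identically on every edge other than the leaf. Continuity at the connecting vertex then forces $V(0)=0$. For the Kirchhoff condition (\ref{kirchof}) at that same vertex, each other incident edge contributes zero outgoing flux, because it is identically zero and hence has zero derivative there; the flux sum thus collapses to the single contribution of the leaf, giving $V_x(0)=0$ as well.

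The decisive step is to observe that $V(0)=0$ together with $V_x(0)=0$ constitute zero Cauchy data for the second-order equation $-V_{xx}=k^2 V$. Substituting the harmonic form yields $B=0$ from $V(0)=0$, and then $Ak=0$, hence $A=0$, from $V_x(0)=0$. Thus $V\equiv 0$ on the leaf, contradicting the existence of a nontrivial localized eigenvector. I would emphasize that the Neumann condition $V_x(l)=0$ at the free end is never even invoked: localization already over-determines the field at the interior vertex, exactly as for the single arc.

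I expect no serious obstacle, as the argument is essentially the single-arc mechanism applied at the interior endpoint. The only point requiring care is the justification that the Kirchhoff sum at the connecting vertex reduces to the leaf's flux alone; this follows because the neighboring edges carry the zero field and hence zero derivative, and it holds irrespective of the degree of the connecting vertex, so the conclusion is uniform over all embeddings of a leaf.
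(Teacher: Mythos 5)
Your proof is correct and follows essentially the same route as the paper's: both impose the vanishing Cauchy data $V(0)=V_x(0)=0$ at the connecting vertex (continuity plus the collapsed Kirchhoff sum) and conclude $A=B=0$ from the harmonic form, without ever needing the Neumann condition at the free end. Your added remarks---justifying why the Kirchhoff sum reduces to the leaf's flux alone and excluding $k=0$---are careful touches the paper leaves implicit, but they do not change the argument.
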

\begin{proof}
Assume a leaf of length $l$, parameterized by $x \in [0,l]$. 
The boundary conditions at $x=0, l$ are $V(0)=0, ~V_x(0)=V_x(l)=0$.
Writing $V$ as
$$ V = A \sin k x + B \cos k x , $$
where all indices have been dropped for simplicity, we get
from the first two conditions
$$A = B = 0 ,$$
so there are no eigenvectors localized on a leaf.
\end{proof}

\subsection{No localized eigenvector on two connected arcs}

We prove that no localized eigenvector exists on two arcs connected at
one vertex, see Fig. \ref{twoarcs}.
\begin{figure}[H]
\centerline{
\epsfig{file=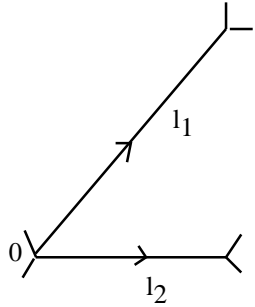,height=4.61 cm,width=4.86 cm,angle=0}
}
\caption{Two connected arcs "embedded" in a graph.}
\label{twoarcs}
\end{figure}
\begin{theorem}
A localized state cannot exist on two arcs connected at one vertex. 
\end{theorem}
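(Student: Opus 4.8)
The plan is to follow the same strategy as for the single arc, writing each component as $V_i = A_i \sin kx + B_i \cos kx$ in the form (\ref{simple}) and then imposing the conditions that make the eigenvector localized in the sense of Definition \ref{localevector}. First I would fix a parameterization: label the two arcs $1,2$ with the shared vertex $S$ placed at $x=0$ on each, and the outer endpoints $P_1,P_2$ placed at $x=l_1$ and $x=l_2$. Localization means every component outside the two arcs is identically zero, so continuity of the field forces $V_1=V_2=0$ at $S$ and, crucially, $V_i(l_i)=0$ at each outer endpoint, while the Kirchhoff flux condition (\ref{kirchof}) must hold at all three vertices.

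The decisive step is to read off the conditions at an outer endpoint $P_i$ correctly. Unlike a leaf, whose free end has degree one and therefore contributes only the Neumann-type condition ${V_i}_x=0$ with \emph{no} Dirichlet constraint, the endpoint of an embedded arc is joined to the rest of the graph. Continuity at $P_i$ with the surrounding (identically zero) components gives the Dirichlet condition $V_i(l_i)=0$, and the flux balance (\ref{kirchof}) at $P_i$ reduces to ${V_i}_x(l_i)=0$ because the external edges carry the zero field and so contribute no flux. Thus each arc inherits \emph{both} $V_i(l_i)=0$ and ${V_i}_x(l_i)=0$ at its embedded endpoint.

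I would then conclude exactly as in the single-arc case: the component $V_i$ solves $V_i''+k^2V_i=0$ with vanishing Cauchy data $V_i(l_i)={V_i}_x(l_i)=0$, which forces $A_i=B_i=0$ and hence $V_i\equiv 0$. Since this applies to both arcs, there is no nontrivial solution and no localized eigenvector exists. (Note that the conditions at $S$ are not even needed; the endpoint data alone over-determine each arc.)

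The main obstacle — really the only subtle point — is to recognize that an embedded endpoint imposes both a Dirichlet and a Neumann-type condition on the incident arc, whereas a degree-one leaf end imposes only the Neumann-type one. This over-determination is precisely what separates the present non-existence result from the two connected leaves treated earlier, where localized eigenvectors \emph{do} exist; the entire content of the theorem lies in this contrast, so the proof should make the distinction between the two endpoint types explicit rather than merely computing.
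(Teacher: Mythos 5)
Your proof is correct, and it rests on the same basic mechanism as the paper's: write each component as $V_i = A_i\sin kx + B_i\cos kx$ and show that the localization conditions admit only the trivial solution. The execution differs in one genuine way. The paper imposes the conditions at all three vertices and solves the resulting linear system: the shared-vertex conditions give $B_1=B_2=0$, after which the conditions at the embedded outer endpoints, $A_i \sin k l_i = 0$ and $A_i \cos k l_i = 0$, force $A_i=0$. You instead discard the shared vertex entirely and observe that each embedded outer endpoint already supplies vanishing Cauchy data, $V_i(l_i)={V_i}_x(l_i)=0$, so uniqueness for $V_i''+k^2V_i=0$ annihilates each component on its own. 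This is more economical and isolates exactly where the obstruction lies: an endpoint glued to a zero region imposes Dirichlet and Neumann conditions simultaneously, which a degree-one leaf end does not, and this is precisely why the two-connected-leaves configuration (which does carry localized eigenvectors) escapes the argument. Your reading also quietly repairs a slight awkwardness in the paper's write-up: there the flux conditions at the two distinct outer vertices are displayed as a single combined equation ${V_1}_x(l_1)+{V_2}_x(l_2)=0$, even though what the derived system actually uses (correctly) are the two separate conditions $A_1c_1=0$ and $A_2c_2=0$, which is exactly the form your argument produces.
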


\begin{proof}
The localization conditions for the eigenvector components are
\begin{eqnarray*}
V_1(0)=V_2(0)=0, \\
V_1(l_1)=V_2(l_2)=0, \\
V_{1x}(0) + V_{2x}(0)=0, \\
V_{1x}(l_1) + V_{2x}(l_2)=0 , \\
\end{eqnarray*}
which yield the following system
\begin{eqnarray*}
B_1=B_2=0 \\
A_1 s_1=0 \\
A_2 s_2=0 \\
A_1c_1+A_2c_2=0 \\
A_1c_1=0 \\
A_2 c_2=0 ,
\end{eqnarray*}
which only has the solution $A_1=A_2=0$, so a localized eigenvector 
cannot exist.
\end{proof}

\subsection{No degree three vertex eigenvector}

\begin{figure}[H]
\centerline{
\epsfig{file=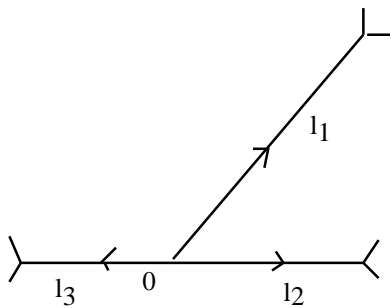,height=4.804 cm,width=6.875 cm,angle=0}
}
\caption{A degree three vertex "embedded" in a graph.}
\label{deg3}
\end{figure}
Consider the configuration of Fig. \ref{deg3} where a degree three
vertex is embedded in a graph, we have the following theorem.
\begin{theorem}
No eigenvector can be localized on three arcs connected at a single
vertex. 
\end{theorem}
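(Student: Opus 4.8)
The plan is to reduce the three-arc, single-vertex (star) configuration to three decoupled boundary conditions, one per outer endpoint, and to show that each one already forces the corresponding arc component to vanish identically. I would label the three arcs $1,2,3$ meeting at the common degree three vertex of Fig. \ref{deg3}, parameterize each from that vertex at $x=0$ to its outer vertex at $x=l_j$ where arc $j$ joins the rest of the graph, and write $V_j = A_j \sin kx + B_j \cos kx$ as in (\ref{simple}).

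The first step is to record the localization conditions. Since a localized eigenvector vanishes on every external edge, continuity at the outer vertex of arc $j$ gives $V_j(l_j)=0$, and the Kirchhoff condition there---where the external edges carry zero field and hence zero flux---forces the outgoing flux of arc $j$ to vanish as well, i.e. $V_{jx}(l_j)=0$. This is exactly the same pair of conditions that appeared in the single-arc and leaf proofs, now imposed at each of the three outer vertices independently.

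The second step is the determinant computation. For each fixed $j$, the two conditions
$$A_j \sin k l_j + B_j \cos k l_j = 0, \qquad A_j \cos k l_j - B_j \sin k l_j = 0$$
form a homogeneous $2\times2$ linear system in $(A_j,B_j)$ whose coefficient determinant equals $-(\sin^2 k l_j + \cos^2 k l_j) = -1 \neq 0$. Hence $A_j = B_j = 0$ for every $j$, and the eigenvector is trivial; the remaining central-vertex relations (continuity $B_1=B_2=B_3$ and Kirchhoff $A_1+A_2+A_3=0$) are then satisfied vacuously and cannot produce a nonzero solution.

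The only genuinely delicate point---and the one I would state carefully---is why each outer vertex contributes both a value condition and a derivative condition. This is precisely the feature that separates the present configuration from the triangle and the quadrilateral of Theorem \ref{tria123}: there each vertex is shared by two arcs of the cycle, so Kirchhoff merely couples two outgoing derivatives rather than pinning a single one to zero, leaving room for a nontrivial solution. In the star configuration each outer vertex sees only one of the three arcs, so its flux must vanish on its own, which overdetermines that arc. Once this observation is made, the result follows from the same rank argument as the single-arc case, and no information from the central vertex is even needed.
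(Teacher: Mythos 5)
Your proof is correct and follows essentially the same route as the paper: both impose the value condition $V_j(l_j)=0$ and the flux condition $V_{jx}(l_j)=0$ at each outer vertex and conclude from a nonvanishing determinant that $A_j=B_j=0$. The only cosmetic difference is that you treat the three $2\times 2$ systems arc by arc, whereas the paper assembles the full $7\times 6$ homogeneous system and observes that the $6\times 6$ submatrix (which is block diagonal with exactly your three blocks) has determinant $1$.
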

\begin{proof}
Let us write the components of an eigenvector localized on the subgraph $l_1,l_2,l_3$.
The conditions are
\begin{eqnarray*}
V_1(l_1)=V_2(l_2)=V_3(l_3)=0, \\ 
{V_1}_x(l_1)= {V_2}_x(l_2)={V_3}_x(l_3)=0 , \\
{V_1}_x(0)+ {V_2}_x(0) + {V_3}_x(0)=0, 
\end{eqnarray*}
yielding
\begin{eqnarray*}
A_1 + A_2+ A_3 =0 , \\
A_1 c_1 - B_1 s_1 =0, \\
A_2 c_2 - B_2 s_2 =0, \\
A_3 c_3 - B_3 s_3 =0, \\
A_1 s_1 + B_1 c_1 =0, \\
A_2 s_2 + B_2 c_2 =0, \\
A_3 s_3 + B_3 c_3 =0, 
\end{eqnarray*}
leading to the homogeneous linear system.
\be \begin{pmatrix} 
1 & . & 1 & . & 1  &  . \cr
c_1 & -s_1 & . & . & .  &  . \cr
s_1 & c_1 & . & . & .  &  . \cr
.   & .   & c_2 & -s_2 & . & .  \cr
.   & .   & s_2 & c_2 & . & .  \cr
.   & .   & .   & .   & c_3 & -s_3 \cr
.   & .   & .   & .   & s_3 & c_3 \cr
\end{pmatrix}
\begin{pmatrix}
A_1 \cr
B_1 \cr
A_2 \cr
B_2 \cr
A_3 \cr
B_3 \cr
\end{pmatrix}
= \begin{pmatrix}
0 \cr
0 \cr
0 \cr
0 \cr
0 \cr
0 \cr
\end{pmatrix}
\ee
The determinant of the submatrix obtained by taking out the first line
is equal to 1, so that the whole matrix has rank greater of equal to 6.
Then there is no other solution than $A_1=B_1=A_2=B_2=A_3=B_3=0$.
\end{proof}


\begin{thebibliography}{10}


\bibitem{Herty10} M. Herty, J. Mohring and V. Sachers . A new model for gas flow in pipe networks. \textit{Mathematical Methods in the Applied Sciences, 33}(7), 845-855, (2010).

\bibitem{water} A. Martin, K. Klamroth, J. Lang, G. Leugering, A. 
Morsi, M. Oberlack and R.  Rosen, R. (Eds.),
\textit{Mathematical optimization of water networks} Vol. 162, 
Springer Science \& Business Media,  (2012).

\bibitem{Kundur94}  P. Kundur, \textit{Power System Stability and Control}, 
1.  New York: McGraw-Hill, (1994)

\bibitem{wb08} D. B. Work, and A. M. Bayen, . Convex formulations of air traffic flow optimization problems. \textit{Proceedings of the IEEE}, 96(12), 2096-2112, (2008).


\bibitem{microwave} Z. Fu, T. Koch, T. M.  Antonsen, E. Ott, and S. M. 
Anlage, Experimental Study of Quantum Graphs with Simple Microwave Networks: Non-Universal Features. \textit{Acta Physica Polonica, A., 132}(6), (2017)


\bibitem{Gaio} M. Gaio, D. Saxena,  J. Bertolotti, D. Pisignano, A. 
Camposeo, and R. Sapienza, . A nanophotonic laser on a graph. \textit{Nature communications, 10}(1), 1-7, (2019).

\bibitem{Cipolato} O. Cipolato, Mode controllability of a Network Random Laser,
Master thesis, University of Padova, (2020).

\bibitem{bk13} G. Berkolaiko, and P. Kuchment, \textit{Introduction to quantum graphs} (No. 186). American Mathematical Soc., (2013).


\bibitem{cks13} J.-G. Caputo, A. Knippel, and E. Simo, . Oscillations of networks: the role of soft nodes. \textit{Journal of Physics A: Mathematical and Theoretical, 46}(3), 035101, (2012).

\bibitem{cckp20} J.-G. Caputo, G. Cruz-Pacheco, A. Knippel and P. 
Panayotaros, Spectra of chains connected to complete graphs. \textit{Linear Algebra and its Applications, 605}, 29-62, (2020).

\bibitem{Hata} S. Hata, and H. Nakao, Localization of Laplacian eigenvectors on random networks. \textit{Scientific reports}, 7(1), 1-11, (2017).


\bibitem{Anderson}  P. W. Anderson, . Absence of diffusion in certain random lattices. \textit{Physical review, 109}(5), 1492, (1958)


\bibitem{Slanina} F. Slanina, . Localization of eigenvectors in random graphs. \textit{The European Physical Journal B, 85}(11), 1-12, (2012).

\bibitem{Bell} R. J. Bell, and P. Dean, Atomic vibrations in vitreous silica. \textit{Discussions of the Faraday society, 50}, 55-61, (1970).




\bibitem{Klopp1} F. Klopp and K. Pankrashkin, . Localization on quantum graphs with random vertex couplings. \textit{Journal of Statistical Physics, 131}(4), 651-673, (2008).


\bibitem{Pankrashkin2} F. Klopp and K. Pankrashkin, . Localization on quantum graphs with random edge lengths. \text{Letters in Mathematical Physics, 87}(1), 99-114, (2009).


\bibitem{Hislop} P. D. Hislop and  O. Post, Anderson localization for radial tree-like random quantum graphs. \textit{Waves in Random and Complex Media, 19}(2), 216-261, (2009). 

\bibitem{method} M. Brio, J.-G. Caputo, and H. Kravitz, Spectral solutions of PDEs on networks. \text{Applied Numerical Mathematics, 172}, 99-117, (2022).


\bibitem{berkolaiko17}  G. Berkolaiko, . An elementary introduction to quantum graphs. \textit{Geometric and computational spectral theory}, 700, 41-72, (2017).

\bibitem{berkolaiko2} G. Berkolaiko, J. B. Kennedy, P.  Kurasov, 
and D. Mugnolo, . Edge connectivity and the spectral gap of combinatorial and quantum graphs. \text{Journal of Physics A: Mathematical and Theoretical}, 50(36), 365201, (2017).

\bibitem{IEEE} University of Illinois Information Trust Institute (2022), IEEE 14-Bus System, University of Illinois Board of Trustees, https://icseg.iti.illinois.edu/ieee-14-bus-system/.

\bibitem{gss13} S. Gnutzmann, H. Schanz, and U. Smilansky,
Topological Resonances in Scattering on Networks (Graphs),
Phys. Rev. Lett. 110, 094101, (2013).

\bibitem{sk03} H. Schanz and T. Kottos,
Scars on Quantum Networks Ignore the Lyapunov Exponent
Phys. Rev. Lett. 90, 234101, (2003).


\end{thebibliography}
\end{document}